\newcommand{\R}{\ensuremath{\mathds R}}
\renewcommand{\P}{\ensuremath{\mathds P}}
\newcommand{\T}{\ensuremath{\mathcal T}}
\renewcommand{\phi}{\varphi}
\renewcommand{\epsilon}{\varepsilon}
\newcommand{\ER}{\exists\R}
\newtheorem{observation}{Observation}
\begin{document}

\HeadingAuthor{Cardinal and Kusters} 
\HeadingTitle{Complexity of Simultaneous Geometric Embedding} 
\title{The Complexity of Simultaneous Geometric Graph Embedding}%
\Ack{Vincent Kusters is partially supported by the ESF EUROCORES programme
  EuroGIGA, CRP GraDR and the Swiss National Science Foundation, SNF Project
  20GG21-134306. Jean Cardinal is partially supported by the ESF EUROCORES
  programme EuroGIGA, CRP ComPoSe.

  This paper previously appeared in: Journal of Graph Algorithms and
  Applications (JGAA), volume 19, number 1, 2015, pages 259--272 (doi:10.7155/jgaa.00356).}

\author[first]{Jean Cardinal}{jcardin@ulb.ac.be}
\author[second]{Vincent Kusters}{vincent.kusters@inf.ethz.ch}

\affiliation[first]{Computer Science Department, Universit\'e libre de Bruxelles (ULB), Belgium.}
\affiliation[second]{Department of Computer Science, ETH Z\"urich, Switzerland.}


\maketitle

\begin{abstract}
  Given a collection of planar graphs $G_1,\dots,G_k$ on the same
  set $V$ of $n$ vertices, the \emph{simultaneous geometric embedding (with mapping)} 
  problem, or simply $k$-SGE, is to find a set $P$ of $n$ points in the plane
  and a bijection $\phi: V \to P$ such that the induced straight-line drawings of
  $G_1,\dots,G_k$ under $\phi$ are all plane.

  This problem is polynomial-time equivalent to weak rectilinear realizability
  of abstract topological graphs, which
  Kyn\v{c}l~(doi:10.1007/s00454-010-9320-x) proved to be complete for $\ER$,
  the existential theory of the reals. Hence the problem $k$-SGE is
  polynomial-time equivalent to several other problems in computational
  geometry, such as recognizing intersection graphs of line segments or finding
  the rectilinear crossing number of a graph.

  We give an elementary reduction from the pseudoline stretchability problem to $k$-SGE, 
  with the property that both numbers $k$ and $n$ are linear in the number of pseudolines.
  This implies not only the $\ER$-hardness result, but also a $2^{2^{\Omega (n)}}$ lower bound 
  on the minimum size of a grid on which any such simultaneous embedding can be
  drawn. This bound is tight.
  Hence there exists such collections of graphs that can be simultaneously embedded, 
  but every simultaneous drawing requires an exponential number of bits per coordinates. 
  The best value that can be extracted from Kyn\v{c}l's proof is only $2^{2^{\Omega (\sqrt{n})}}$.
\end{abstract}

\Body 

\section{Introduction}
\label{sec:introduction}

Given graphs $G_1=(V,E_1),\dots,G_k=(V,E_k)$ on $n$ vertices,
\emph{simultaneous geometric embedding (with mapping)} or simply $k$-SGE, is
the problem of finding a point set $P\subset\R^2$ of size $n$ and a bijection
$\phi: V \to P$ such that the induced straight-line drawings of $G_1,\dots,G_k$
under $\phi$ are all plane~\cite{brass2007simultaneous}. The corresponding
decision problem (which we also refer to as $k$-SGE) simply asks whether such a
point set exists. It is important to note that $k$ is part of the input and can
thus depend on $n$. The problem 1-SGE amounts to planarity testing. The problem
$2$-SGE is typically referred to simply as SGE. Fig.~\ref{fig:sge_example}
shows an example of two graphs and a $2$-SGE.

\begin{figure}[h]
  \centering
  \includegraphics{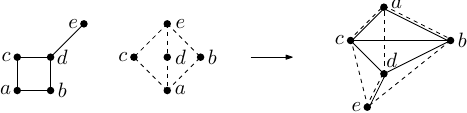}
  \caption{Graphs $G_1$ and $G_2$ on the same vertex set and a $2$-SGE of $G_1$ and $G_2$.}
  \label{fig:sge_example}
\end{figure}

Early work on the topic focused on the existence of $k$-SGEs for restricted
graph classes. The SGE problem was originally introduced by Brass
et.~al~\cite{brass2007simultaneous}. They show that there is a pair of
outerplanar graphs on the same vertex set that does not admit a $2$-SGE.
Additionally, they give a triple of paths that does not admit a $3$-SGE. The
authors also show that various other classes of graphs, such as a pair of
caterpillars, an extended star and a path, or two stars always admit a $2$-SGE.
The most recent positive result is a $2$-SGE construction that works for
generalizations of caterpillars with generalizations of stars, spiders and
caterpillars~\cite{digiacomo2014planar}. The question of whether any two trees
admit a $2$-SGE remained open for six years, until the question was settled in
the negative with a counterexample~\cite{geyer2009two}. The most recent
negative result gives a tree and a path that do not admit a
$2$-SGE~\cite{angelini2011tree}. Research has since focused on other
variations of the problem, such as \emph{simultaneous embedding with fixed
  edges} (edges are drawn as arbitrary simple curves, but all graphs
must use identical curves for identical edges), \emph{matched drawings}
(vertices have fixed $y$-coordinates in all drawings, but may have different
$x$-coordinates in each drawing), or \emph{partial simultaneous geometric
  embedding} (a limited number of vertices may be mapped to different points in
different drawings)~\cite{ColumnPlanarity_GD2014}. The decision problem $2$-SGE
is NP-hard~\cite{estrella2008simultaneous}.
See~\cite{blasius2013handbook} for an excellent survey.

The existential theory of the reals is the set of true sentences of the form
$\exists (x_1, \dots, x_n) : \phi(x_1, \dots, x_n)$, where $\phi$ is a
$(\wedge,\vee,\neg)$-formula over the signature $(0, 1, +, *, <,\leq,=)$
interpreted over the universe of real numbers~\cite{schaefer2013realizability}.
The decision problem ETR asks whether a given sentence is true. The complexity
class $\ER$ is defined as the set of decision problems that can be reduced to
ETR in polynomial time. A problem is $\ER$-hard if it is at least as hard as
every problem in $\ER$, i.e., if every problem in $\ER$ can be reduced to it in
polynomial time~\cite{schaefer2010complexity,basu2007existential}. A problem is
$\ER$-complete if it belongs to $\ER$ and is $\ER$-hard. It is known that
$\textrm{NP}\subseteq\ER\subseteq\textrm{PSPACE}$: Boolean satisfiability can
be encoded as a decision problem on a set of polynomial inequalities and
Canny~\cite{canny1988algebraic} gave a polynomial-space algorithm for ETR.

In 2011, Kyn\v{c}l~\cite{kyncl2011simple} proved that \emph{weak rectilinear
  realizability} of \emph{abstract topological graphs} is $\ER$-complete. Since
this problem reduces to $k$-SGE in polynomial
time~\cite{gassner2006simultaneous} and since $k$-SGE belongs to
$\ER$~\cite{estrella2008simultaneous}, it follows that $k$-SGE is
$\ER$-complete. The $k$-SGE problem is therefore polynomial-time equivalent to
many other classical problems in computational geometry, such as finding the
rectilinear crossing number of a graph~\cite{bienstock1991some}, recognizing
unit disk graphs~\cite{mcdiarmid2013integer}, recognizing intersection graphs
of convex sets in the plane~\cite{schaefer2010complexity}, recognizing
intersection graphs of segments~\cite{KM94,matousek2014intersection}, solving
the Steinitz problem~\cite{mnev1988universality}, and deciding the
realizability of linkages~\cite{kapovich2002universality}. We refer the reader
to recent work of Schaefer for more references and
examples~\cite{schaefer2010complexity,schaefer2013realizability}.

Our contribution is an elementary self-contained construction showing the $\ER$-hardness of $k$-SGE.
It involves a direct translation of the information contained in an arrangement of $n$
pseudolines into a set of $n$ planar graphs on a set $V$ of $O(n)$ vertices, in such a way that the
pseudoline arrangement is stretchable if and only if the graphs can be simultaneously embedded.
The main interesting feature of this construction is that the size of $V$ is linear 
in the number of pseudolines. This implies that for some positive instances of $k$-SGE, 
representing the point set by encoding the coordinates of each point requires an 
{\em exponential number of bits}. This follows from the analogous result on realizations 
of order types by Goodman, Pollack, and Sturmfels~\cite{goodman1989coordinate}. Our result improves on
Kyn\v{c}l's construction, which shows only that $2^{\Omega(\sqrt{n})}$ bits are
sometimes necessary.

In Section~\ref{sec:pseudolines_and_order_types}, we briefly recall standard
results on (realizability of) order types and (stretchability of) pseudoline
arrangements. The reduction itself is given in Section~\ref{sec:red}.
Section~\ref{sec:grid} presents our results on coordinate sizes in simultaneous
embeddings.

\section{Pseudolines and order types}
\label{sec:pseudolines_and_order_types}

Many combinatorial properties of a point set in the plane are captured by its
\emph{order type}. The order type of a point set $P\subset\R^2$ is the mapping
$\chi: {P\choose 3}\to \{-1,0,+1\}$, where
\[\chi (a, b, c) = \mathrm{sign}\left(\left|
\begin{array}{ccc}
a_x & a_y & 1\\
b_x & b_y & 1\\
c_x & c_y & 1\\
\end{array}
\right|\right).\]
The value of $\chi(a, b, c)$ determines whether the three points $a, b, c$ make
a left turn ($+1$), a right turn ($-1$), or are aligned ($0$). When
$\chi(a,b,c)\neq 0$ for all triples $a,b,c$, the point set is said to be in
\emph{general position} and $\chi$ is called \emph{uniform}. Among other
things, the order type encodes the convex hull of a point set and whether two
segments with endpoints in the point set intersect.

\emph{Abstract order types} generalize the notion of order types on planar
point sets. Knuth~\cite{knuth1992axioms} calls uniform abstract order types
\emph{CC-systems} and defines them by the following five axioms (we write
$\chi(p,q,r)$ instead of $\chi(p,q,r)=+1$ and $\neg\chi(p,q,r)$ instead of
$\chi(p,q,r)=-1$):
\begin{enumerate}
\item Cyclic symmetry: $\chi(p,q,r)\implies\chi(q,r,p)$.
\item Antisymmetry: $\chi(p,q,r)\implies\neg\chi(p,r,q)$.
\item Nondegeneracy: $\chi(p,q,r) \vee \chi(p,r,q)$.
\item Interiority: $\chi(t,q,r) \wedge \chi(p,t,r) \wedge \chi(p,q,t) \implies
  \chi(p,q,r)$.
\item Transitivity: $\chi(t,s,p)\wedge \chi(t,s,q)\wedge \chi(t,s,r)\wedge
  \chi(t,p,q)\wedge \chi(t,q,r)\implies \chi(t,p,r)$.
\end{enumerate}

Abstract order types are connected to the well-studied mathematical field of
oriented matroids. Specifically, if we consider the equivalence class where
$\chi=-\chi$, then Knuth~\cite{knuth1992axioms} proves that (equivalence
classes of) uniform abstract order types are in one-to-one correspondence with
\emph{uniform acyclic rank-3 oriented matroids}. We refer the interested reader
to~\cite{bjorner1999oriented} for more information on oriented matroids.
An abstract order type $\chi$ is \emph{realizable} if there exists a point set
in $\R^2$ with order type $\chi$. Not all abstract order types are realizable:
the smallest non-realizable abstract order type is the well-known Pappus
arrangement on 9 points.

Order types are closely related to \emph{pseudoline arrangements}.
Pseudoline arrangements are usually considered
in the real projective plane $\P^2$, where they can be defined as simple closed
curves, every pair of which meet in exactly one
point~\cite{grunbaum1972arrangements}. We recall that the projective plane is
the extension of the Euclidean plane by a point ``at infinity'' for each
direction $\alpha$ where the lines with direction $\alpha$ are defined to
intersect, and the line at infinity contains exactly the points at
infinity. For an excellent introduction to projective geometry, we refer the
interested reader to~\cite{richter2011perspectives}, but we do not assume any
familiarity with projective geometry here. Two projective pseudoline
arrangements $A$ and $A'$ in $\P^2$ are \emph{isomorphic} if there is a
self-homeomorphism of the projective plane that turns $A$ into $A'$.

(Uniform) abstract order types correspond exactly to (simple) projective
pseudoline arrangements with a marked face. For straight-line arrangements, the
marked face corresponds to the convex hull of the point set described by the
order type. For more background on pseudoline arrangements with a marked face
and their encodings, the reader is referred to Felsner~\cite{felsnerbook}
(Chapter 6).

By the Folkman-Lawrence topological representation
theorem~\cite{folkman1978oriented}, equivalence classes of projective
pseudoline arrangements correspond in one-to-one fashion to reorientation
classes of simple rank-3 oriented matroids~\cite{bjorner1999oriented}. A
pseudoline arrangement is \emph{simple} if no three pseudolines meet in the
same point. A simple projective pseudoline arrangement is
\emph{stretchable} if and only if it is isomorphic to a simple arrangement of
straight lines. In 1988, Mn\"ev proved that every semialgebraic set is stably
equivalent to the realization space of some rank-3 oriented
matroid~\cite{mnev1988universality}. Furthermore it was shown that the
underlying matroid could be made uniform (see also Lemma 4 in Shor~\cite{S91}).
As a by-product of these results, the \emph{simple pseudoline
stretchability} problem of deciding stretchability of a simple projective pseudoline
arrangement is $\ER$-complete~\cite{schaefer2010complexity}.

We define \emph{uniform order type realizability} as
the problem of deciding whether a given uniform abstract order type has a
realization. The following lemma summarizes the correspondence between abstract order types and pseudoline arrangements, and the polynomial-time equivalence of the realizability and stretchability problems. The order type realizability problem will be the starting point of our reduction.
\begin{lemma}
  \label{lem:realizable_iff_stretchable}
  Given a uniform abstract order type $\chi$, we can compute in polynomial time
  a description of a simple projective pseudoline arrangement $A$ with a marked face 
  such that $\chi$ is realizable if and only if $A$ is stretchable. Conversely, given a simple projective
  pseudoline arrangement $A$ with a marked face, we can compute in polynomial time a uniform
  abstract order type $\chi$ such that $\chi$ is realizable if and only if $A$
  is stretchable.
\end{lemma}

\section{$\ER$-completeness of $k$-SGE}
\label{sec:red}

We first reproduce the reduction from weak rectilinear realizability due to
Gassner et al.~\cite{gassner2006simultaneous} and then give a direct proof by
reduction from the stretchability problem.

\subsection{Reduction from weak rectilinear realizability}
\label{sub:reduction_from_weak_rectilinear_realizability}
An \emph{abstract topological graph} (AT-graph) is a pair $(G,R)$ where
$G=(V,E)$ is a graph and $R\subseteq{E\choose 2}$ is a set of pairs of its
edges. A straight-line drawing of $G$ is a \emph{weak rectilinear realization}
of $(G,R)$ if every pair of edges that cross in the drawing is contained in
$R$. Deciding if an AT-graph has a weak rectilinear realization was shown to be
$\ER$-complete by Kyn\v{c}l~\cite{kyncl2011simple}.

Kyn\v{c}l proves $\ER$-hardness of weak rectilinear realizability by a
reduction from simple pseudoline stretchability. Given a simple
arrangement $A$ of $m$ pseudolines, Kyn\v{c}l constructs an AT-graph $(G,R)$
with $G=(V,E)$ that admits a weak rectilinear realization if and only if $A$ is
stretchable. In this construction, similar to the \emph{order forcing lemma} of
Kratochv\'{\i}l and Matou\v{s}ek~\cite{KM94}, there is one edge associated with
each pseudoline, but there is also a pair of edges corresponding to each
crossing between two pseudolines.

%

The weak rectilinear realizability problem is closely related to the $k$-SGE
problem. The following equivalence is analogous to the equivalence given in
Theorem~2 of~\cite{gassner2006simultaneous}. Given graphs
$G_1=(V,E_1),\dots,G_k=(V,E_k)$, we construct an AT-graph $(G,R)$ with
$G=(V,\bigcup_i E_i)$ and $\{e,f\}\in R$ if and only if $\{e,f\}\not\subseteq
E_i$ for all $i$. Then $G_1,\dots,G_k$ admit a $k$-SGE if and only if $(G,R)$
admits a weak rectilinear realization. Conversely, given an AT-graph $(G,R)$
with $G=(V,E)$, we construct a graph $G_{ef}=(V,\{e,f\})$ for each pair of
edges $\{e,f\}\not\in R$. Then $(G,R)$ admits a weak rectilinear realization if
and only if the family $F=\{G_{ef} \mid \{e,f\}\not\in R\}$ admits an
$|F|$-SGE.

Combining Kyn\v{c}l's argument with this equivalence yields the following:
given an arrangement $A$ of $m$ pseudolines, we can construct a set of $k_m$
graphs $G_i$, each on $n_m$ vertices, such that $G_1,\dots,G_{k_m}$ admit a
$k_m$-SGE if and only if $A$ is stretchable. Here, $k_m=\Theta(m^4)$ and
$n_m=\Theta(m^2)$. Fix any constant $0<\epsilon\leq 1$ and add an additional
$m^{4/\epsilon}$ isolated vertices to each graph $G_i$. After this modification,
$n_m=\Theta(m^{4/\epsilon})$ and thus $k_m=\Theta(n_m^\epsilon)$. Since this
takes polynomial time, we obtain the following:
\begin{theorem}
  \label{thm:sge_er_complete_weak}
  Given graphs $G_1=(V,E_1),\dots,G_k=(V,E_k)$ on $n$ vertices, the decision
  problem $k$-SGE is $\ER$-complete for $k=\Omega(n^\epsilon)$ and any constant
  $\epsilon>0$.
\end{theorem}

\subsection{Reduction from uniform order type realizability}
We will give an alternative proof of the result from the previous section via a
polynomial-time reduction from uniform order type realizability to $k$-SGE.

\begin{figure}[b]
  \centering
  \subfigure[\label{fig:radial_def_ccw}]{\includegraphics{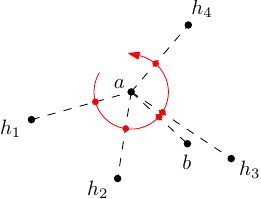}}\quad
  \subfigure[\label{fig:dual_radial}]{\includegraphics{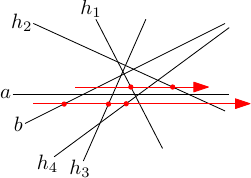}}
  \caption{(a) The radial ordering around point $a$ is $h_1, h_2, b, h_3, h_4$
    (up to a circular shift)~\cite{ReconstructingRadial_ISAAC2014}. (b) In the dual, $h_1,h_2$
    intersect $a$ from above in this order and $b,h_3,h_4$ intersect $a$ from
    below in this order.}
  \label{fig:radial_primal_dual}
\end{figure}

\subsubsection{Radial Systems}
The high-level idea is to associate one graph with each point $v$ of the
uniform abstract order type $\chi$ so that a proper geometric embedding of the
graph forces some \emph{radial ordering} of the other points around $v$. In a
set of $n$ points in the plane, the radial ordering $R(v)$ associated with the
point $v$ is simply the order in which the $n-1$ other points are encountered
by a counterclockwise ray sweep around $v$. We will define this order up to a
circular shift. This is illustrated in Fig.~\ref{fig:radial_def_ccw}. We call
$R$ the \emph{(counterclockwise) radial system} of $\chi$.

The radial system can be inferred from the order type of the point set, and
therefore can be defined even if $\chi$ is not realizable. A way to determine
the radial ordering of $v$ is to pick another point $w$ and first consider only
the points $x$ such that $\chi (v, w, x)=+1$, that is, the points on the left
of the oriented line $vw$. We can then sort these points using that $x<x'$
whenever $\chi (x, v, x') =-1$. We can similarly sort the points $x$ such that
$\chi(v, w, x)=+1$ and recover the complete radial ordering $R(v)$.

The radial system can also be extracted from the pseudoline arrangement
corresponding to the abstract order type. For the pseudoline $v$, the
corresponding radial ordering is constructed as follows. Starting on any point
on the pseudoline $v$, we first report the order of the successive
intersections with pseudolines coming from the same side as the marked face, followed
by the order of the intersections with the pseudolines coming from the other side.
In a Euclidean realization of the arrangement, this corresponds to pseudolines
coming from {\em above} and from {\em below}, respectively. 
This dual definition of the radial
orderings is illustrated on Fig.~\ref{fig:dual_radial}. Note that the radial
orderings are not the same as the {\em local sequences} defined by Goodman and
Pollack~\cite{goodman1984semispaces}. The local sequence for a pseudoline $v$
is simply the order of the intersections with the other pseudolines, and
correspond in the primal point set to a sweep with a {\em line} through the
point $v$, instead of a ray.

The relation between radial systems and order types has been studied in depth
in a more general setting in a recent paper from a superset of the current
authors~\cite{ReconstructingRadial_ISAAC2014}. It was shown in particular that the radial orderings
alone are {\em not sufficient} to recover the complete order type of a point
set in the plane. Furthermore, for point sets with a triangular convex hull,
there can be as many as $n-1$ different order types having the exact same
radial orderings for each point. The reader is referred to this paper for more
results and examples.

It is not too difficult to show, however, that the set of radial orderings is
sufficient to recover the order type, provided we also know the points on the
convex hull. This is a specialization of Lemma~3 in~\cite{ReconstructingRadial_ISAAC2014}.
\begin{lemma}[\cite{ReconstructingRadial_ISAAC2014}]
  \label{lem:ch+ro}
  Consider a realizable abstract order type $\chi$ on $n$ points, let $S$ be
  the set of counterclockwise radial orderings of the points, and let $H$ be
  the set of points on the convex hull. Then the pair $(S, H)$ uniquely
  determines $\chi$.
\end{lemma}
The proof is straightforward and involves three steps. We first recover the
order of the points on the convex hull by looking at the radial ordering of one
of them. Next, we recover the orientation of every triple with at least one
point $p$ on the convex hull from the radial ordering of $p$. Finally, the
orientations of the remaining triples are deduced by sweeping a ray around a
point on the convex hull and then sweeping a ray around every point
encountered.

\subsubsection{Reduction}

Before delving into the construction, we have to argue that we can assume
without loss of generality that the convex hull of the input abstract order
type $\chi$ for the realizability problem is triangular. 
Using Lemma~\ref{lem:realizable_iff_stretchable}, we can compute a projective
pseudoline arrangement $A$ that is stretchable if and only if $\chi$ is
realizable, in such a way that the convex hull of $\chi$ corresponds to the
marked face of $A$. If this face is bounded by at
most three pseudolines, then we are done. Otherwise, since it is known that every
projective arrangement of $n$ pseudolines has at least $n$ triangular
faces~\cite{levi1926teilung}, we make such a face the marked face of $A$.
Applying Lemma~\ref{lem:realizable_iff_stretchable} in the other direction finally gives
us an abstract order type $\chi'$ with a triangular convex hull that is
realizable if and only if $\chi$ is realizable.

We now have all the ingredients required for the reduction. For each $v\in V$ we
define the \emph{wheel graph} $W_v$ on $V$ as the union of the cycle $R(v)$ 
corresponding to the radial ordering around $v$, and the star connecting $v$ 
to all vertices in $R(v)$. The purpose of including such a graph is to encode the
radial ordering $R(v)$ of the $n-1$ other points around $v$.

We next create the labeled graph $T_v$ by embedding three copies of $W_v$ into 
the interior faces of a copy of $K_4$, the 
complete graph on four vertices $\{t_1,t_2,t_3,t_4\}$, as shown on the left 
in Fig.~\ref{fig:wheel_frame}. 
We distinguish the vertices of different copies by adding a superscript $i$ to the 
vertices of copy $i$. The
convex hull $h_1^1,h_2^1,h_3^1$ is embedded onto $t_1,t_4,t_3$; the convex hull
$h_1^2,h_2^2,h_3^2$ is embedded onto $t_2,t_4,t_1$; and the convex hull
$h_1^3,h_2^3,h_3^3$ is embedded onto $t_3,t_4,t_2$. Fig.~\ref{fig:wheel_frame}
shows an example of a wheel graph $W_v$ and the resulting graph $T_v$. The graph
$T_v$ has exactly $3n-5$ vertices. The reason why we need to embed three copies of
the wheel graph $W_v$, and not simply one, is that the abstract order type will be
preserved only provided the convex hull is the same. We will see that three copies
are sufficient to guarantee that at least one of them will have the same convex hull
as the one specified by the original abstract order type. 

\begin{figure}[t]
  \centering
  \includegraphics{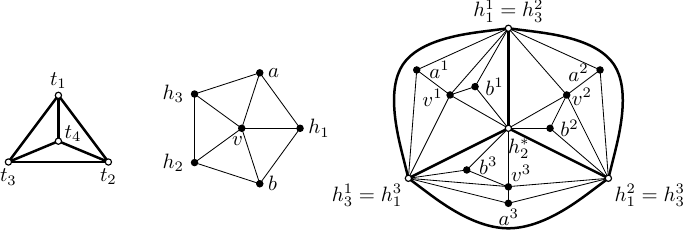}
  \caption{From left to right: the copy of $K_4$, a wheel graph $W_v$ for an
    order type with convex hull $h_1,h_2,h_3$, and the graph $T_v$. In $T_v$ we
    have $t_1=h_1^1=h_3^2$, $t_2=h_1^2=h_3^3$, $t_3=h_3^1=h_1^3$, and
    $t_4=h_2^1=h_2^2=h_2^3$.}
  \label{fig:wheel_frame}
\end{figure}

Though the $T_v$ in the example is maximal planar, this is not always the case.
We do, however, have the following.
\begin{lemma}
  \label{lem:tv_3_connected}
  Each $T_v$ is $3$-connected.
\end{lemma}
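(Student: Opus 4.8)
The plan is to present $T_v$ as the union of the frame graph $T$ (whose vertices $t_1,\dots,t_4$ form a $K_4$) together with three copies $W^1,W^2,W^3$ of the wheel $W_v$, one glued into each interior triangular face of $T$ with the three hull vertices of $\chi$ identified with the three corners of that face. First I would record four consequences of the construction: (a) $V(T_v)=V(W^1)\cup V(W^2)\cup V(W^3)$, so every vertex of $T_v$ lies in some $W^i$; (b) the three frame vertices lying in $W^i$ are precisely the corners of the $i$-th interior face of $T$; (c) reading off the prescribed identifications, $W^1\cap W^2=\{t_1,t_4\}$, $W^2\cap W^3=\{t_2,t_4\}$ and $W^1\cap W^3=\{t_3,t_4\}$, so in particular all three copies contain $t_4$; and (d) $|V(T_v)|=3(n-3)+4\ge4$ (we may assume $n\ge4$, the case $n\le3$ being trivial).

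The only external ingredient I would use is the classical fact that a wheel on $m\ge4$ vertices is $3$-connected, equivalently that it stays connected after the deletion of any two of its vertices (delete the hub and the rim becomes a subpath; keep the hub and it is adjacent to every surviving rim vertex). To rule out a $2$-cut $\{a,b\}$ of $T_v$ I would then argue as follows. At least two of $t_1,\dots,t_4$ avoid $\{a,b\}$, and since they span a clique they lie together in one component $C$ of $T_v-\{a,b\}$. Now fix $i\in\{1,2,3\}$: the set $\{a,b\}$ meets $W^i$ in at most two vertices, so $W^i-\{a,b\}$ is connected; moreover at least one of the three frame vertices of $W^i$ avoids $\{a,b\}$ and hence lies in $C$, so the connected graph $W^i-\{a,b\}$ is entirely contained in $C$. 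Since every vertex of $T_v$ lies in some $W^i$, this yields $V(T_v)\setminus\{a,b\}\subseteq C$, i.e.\ $T_v-\{a,b\}=C$ is connected, a contradiction. As $T_v$ has at least four vertices and no $2$-cut, it is $3$-connected.

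I do not expect a real obstacle; the argument is short. The only point needing a moment's care is the role of the \emph{shared} frame: each $W^i$ contains frame vertices, any two of the copies have a frame vertex in common, and all three contain $t_4$, and it is this overlap that fuses the surviving parts of the three wheels into a single component. (Merely gluing wheels pairwise along an edge would not preserve $3$-connectivity, so the frame structure is doing genuine work.) Everything else --- verifying the intersection pattern in (c), and that $E(T_v)=E(T)\cup\bigcup_i E(W^i)$ so that connectivity within a copy transfers to $T_v$ --- is routine bookkeeping.
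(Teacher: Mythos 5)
Your proof is correct, but it takes a different route from the paper's. The paper argues in the ``positive'' direction via Menger's theorem: it observes that from any vertex $u$ of a wheel there is a fan of three paths to the three hull vertices $h_1,h_2,h_3$ that pairwise share only $u$ (one clockwise along the rim, one counterclockwise, one through the hub), that the frame $T$ admits the same kind of fan, and then asserts that concatenating such fans through the frame yields three internally disjoint paths between any two vertices of $T_v$. You instead argue by contradiction against a separator of size at most two: each copy $W^i$ minus at most two vertices stays connected because wheels are $3$-connected, at least one of its three frame corners survives, and the surviving frame vertices form a clique that fuses the three surviving wheel fragments into one component. The two arguments rest on the same two facts (wheels are $3$-connected; the copies overlap pairwise in frame vertices, all containing $t_4$), related by Menger's theorem, but they are organized differently. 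The paper's version is constructive and exhibits the three paths explicitly, though the final gluing step (``it follows immediately'') elides the case analysis needed to route the fans of $u$ and $u'$ disjointly through the shared frame when $u$ and $u'$ lie in different copies; your separator argument sidesteps that gluing entirely and is, if anything, easier to verify line by line, at the cost of not producing the paths. Your explicit bookkeeping of the intersection pattern $W^1\cap W^2=\{t_1,t_4\}$, etc., and of the edge set $E(T_v)=E(T)\cup\bigcup_i E(W^i)$ (needed since the hull vertices of $\chi$ need not be consecutive on the rim of $W_v$, so the frame edges are not automatically wheel edges) is a welcome addition that the paper leaves implicit.
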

\begin{proof}
  Using symmetry it is easy to verify that $W_v$ is $3$-connected.
  We will use Menger's theorem to prove that $T_v$ is also $3$-connected. Let
  $u$ be any vertex of $W_v$. From every vertex $u$ in $W_v$ there is a path to
  $h_1$, a path to $h_2$ and a path to $h_3$ such that the paths share no vertex
  other than $u$. This can be seen as follows. If $u=v$ then we can reach each
  $h_i$ in one step. Otherwise, one path traverses the cycle in a clockwise
  direction, one traverses the cycle in a counterclockwise direction and one
  goes via $v$. The same holds for the copy of $K_4$ (which can also be thought 
  of as a wheel graph). It follows immediately that there are three interior pairwise
  vertex-disjoint paths between every two vertices in $T_v$. Hence, the lemma
  follows by Menger's theorem.
\end{proof}
Since $T_v$ is $3$-connected, all embeddings of $T_v$ are the same up to
reflection and the choice of the outer face. Let $\T$ be the set of all $T_v$.
\begin{theorem}
  \label{thm:sge_iff_realizable}
  Given a abstract order type $\chi$ on a set $V$ of $n$ elements, we can compute in
  polynomial time a set $\T$ of $n$ graphs, each on the same set of $3n-5$
  vertices, such that $\T$ admits an $n$-SGE if and only if $\chi$ is
  realizable.
\end{theorem}
\begin{proof}
  Suppose that $\chi$ is realizable. Let $P$ be a labeled point set that
  realizes $\chi$ and let $p(v)$ be the point in $P$ that corresponds to $v$ in
  $\chi$. After possibly reflecting $P$ along the $y$-axis, the
  counterclockwise ray sweep around each point $p(v)\in P$ encounters the other
  points of $P$ in the order $R(v)$. Hence, by construction of the wheel
  graphs, the induced straight-line drawing of each $W_v$ on $P$ is plane. A
  labeled point set whose induced straight-line drawing of each $T_v$ is plane
  can now easily be constructed from three copies of $P$ and affine
  transformations.

  Conversely, suppose that $\T$ has an $n$-SGE $\phi$ and consider its
  convex hull. Note that the convex hull of $\phi$ corresponds to a mutual face
  of all $T_v$: if some $T_v$ does not have a face that corresponds to the
  convex hull, then some vertex of $T_v$ must have been embedded in the outer
  face of $T_v$ in $\phi$, which is impossible by
  Lemma~\ref{lem:tv_3_connected}. If $t_1,t_2,t_3$ is the (clockwise) outer face
  in $\phi$, then the point set corresponding to one of the three copies is a
  realization of $\chi$. This can be seen as follows. If $t_1,t_2,t_3$ is the
  outer face in this clockwise order, then the triangle $h_1^1,h_2^1,h_3^1$ is
  also oriented in this clockwise order in $\phi$. By
  Lemma~\ref{lem:tv_3_connected}, this triangle must form the convex hull of
  each $W_v^1$. Hence, any swap of two elements in any radial ordering
  $R(v^1)$ in $\phi$ will induce a crossing in the drawing of $W_v^1$. It
  follows that $\phi$ is consistent with all radial orderings and therefore, 
  from Lemma~\ref{lem:ch+ro}, it induces a realization of $\chi$. 
  If a face other than $t_1,t_2,t_3$ was chosen to be the
  outer face in $\phi$, say a face bounded by three vertices of copy one, then
  the point set corresponding to the vertices of the second copy (or the third; both work)
  is a realization of $\chi$ by a similar argument. This concludes the proof.
\end{proof}
We showed that uniform order type realizability can be reduced in polynomial
time to $k$-SGE. Since uniform order type realizability is $\ER$-complete, it
follows that $k$-SGE is $\ER$-hard and hence $\ER$-complete by the fact that
$k$-SGE belongs to $\ER$~\cite{estrella2008simultaneous}. We finally add a
suitable amount of isolated vertices as explained in
Subsection~\ref{sub:reduction_from_weak_rectilinear_realizability} to complete
our alternative proof of Theorem~\ref{thm:sge_er_complete_weak}.

We define \emph{radial system realizability} as the problem of deciding whether
a given system of permutations $R$ is the radial system of a set of points in
$\R^2$.

\begin{observation}
  \label{obs:radial_system_realizability}
  Radial system realizability is $\ER$-complete.
\end{observation}
\begin{proof}
  We prove that radial system realizability is polynomially equivalent to
  uniform order type realizability.

  Consider an abstract order type $\chi$. Using the method described at the
  beginning of this section, compute an abstract order type $\chi'$ from $\chi$
  in polynomial time such that $\chi'$ is realizable if and only if $\chi$ is
  realizable and $\chi'$ has a triangular convex hull. Compute the radial
  system $R$ of $\chi'$. Since $\chi'$ has a triangular convex hull, $\chi'$ is
  the only abstract order type with radial system
  $R$~(Theorem~1~in~\cite{ReconstructingRadial_ISAAC2014}). Hence, $R$ is
  realizable if and only if $\chi$ is realizable.

  Conversely, consider a system of permutations $R$ on a set of $n$ elements.
  We compute in polynomial time the set $T(R)$ of at most $n-1$ uniform
  abstract order types that have $R$ as their radial system~(Theorem~1 and
  Corollary~1 in~\cite{ReconstructingRadial_ISAAC2014}). Then $R$ is realizable
  if and only if at least one uniform abstract order type in $T(R)$ is
  realizable.
\end{proof}

\section{Simultaneous geometric embeddings requiring doubly exponential grids}
\label{sec:grid}

Many graph drawing questions involve drawing graphs on a small grid. Our
construction gives insight on the following simple problem: given a collection
of $k$ graphs on $n$ vertices which admit a $k$-SGE, can we provide any
guarantee on the size of the largest grid on which we can embed them?

Since the decision problem is $\ER$-hard, it is unlikely that simultaneous
geometric embeddings can all be drawn on a small grid. In fact, showing that
all collections of $k=\mathrm{poly}(n)$ graphs that admit a $k$-SGE, admit a
$k$-SGE on a grid of size at most exponential in a polynomial in $n$ would
directly imply that $\ER = NP$, since the drawing could be encoded using a
polynomial number of bits and used as a certificate.

Our construction implies the following lower bound on the size of the smallest
grid required for a $k$-SGE.
\begin{theorem}
  \label{th:large_coords}
  There exist collections of $k$ graphs on $n=\Theta(k)$ vertices that admit a
  $k$-SGE, every $k$-SGE of which requires a grid of size $2^{2^{\Omega (n)}}$.
\end{theorem}
\begin{proof}
  We use a well-known construction due to Goodman, Pollack, and
  Sturmfels~\cite{goodman1989coordinate}. They construct a
  set of $k$ points in the plane such that every realization of its order type
  requires a grid of size $2^{2^{\Omega (k)}}$. This construction implements an
  iterative squaring procedure using the multiplication gadget from Von
  Staudt's algebra of throws~\cite{RG95,mcdiarmid2013integer}.

  Let $\chi$ be an order type on such a set of $k$ points requiring a doubly
  exponential-size grid. By Theorem~\ref{thm:sge_iff_realizable}, we can
  construct graphs $T_1,\dots,T_k$ where each $T_i$ has $n=3k-5$ vertices, such
  that every point set $P$ that admits an $k$-SGE of $T_1,\dots,T_k$, where
  $|P|=n$, will contain a copy of a realization of $\chi$. By definition of
  $\chi$, $P$ cannot be represented with points of integer coordinates smaller
  than $2^{2^{\Omega (n)}}$.
\end{proof}
In the same paper, Goodman, Pollack, and Sturmfels~\cite{goodman1989coordinate} prove that
every realizable order type in general position has a realization with
coordinates bounded by $2^{2^{O(n)}}$. If a set of graphs admits a $k$-SGE,
then the resulting point set can perturbed into general position without
introducing any crossing. Finally, since the order type of a point set
determines whether two segments cross, it follows that a $k$-SGE never requires
coordinates larger than $2^{2^{O(n)}}$. Hence, Theorem~\ref{th:large_coords} is
tight.

This is a significant improvement compared to what can be extracted from the
construction of Kyn\v{c}l~\cite{kyncl2011simple}. In the latter, an arrangement
of $m$ pseudolines is realized via the simultaneous geometric embedding of
graphs on a set of $k_m$ graphs on $n_m=\Theta(m^2)$ vertices. Hence although
the coordinates may have value $2^{2^{\Omega(m)}}$, this is only
$2^{2^{\Omega(\sqrt{n_m})}}$.

\section{Concluding remarks}
\label{sec:concluding_remarks}

We gave an alternative proof for the $\ER$-hardness of $k$-SGE and we showed
that a $k$-SGE may sometimes need a grid of size $2^{2^{\Omega(n)}}$. Our
hardness proof relies on choosing $k=\Omega(n^\epsilon)$, and it is not clear
how to weaken this requirement. The complexity of the cases $k=O(\log n)$ and
in particular $k=2$ are still open, including whether these problems are in NP.

\paragraph{Acknowledgments.}
This work was initiated during a visit of the second author in Brussels,
supported by the EUROCORES programme EUROGIGA, CRP ComPoSe. The authors
gratefully acknowledge discussions with Stefan Langerman, who gave insightful
comments on preliminary versions of the results. The authors would like to
thank Marcus Schaefer for pointing out the relevance of~\cite{kyncl2011simple}
and the anonymous reviewers for their helpful suggestions.

\bibliographystyle{abbrvurl}
\bibliography{bibliography}

\begin{thebibliography}{10}

\bibitem{ReconstructingRadial_ISAAC2014}
O.~Aichholzer, J.~Cardinal, V.~Kusters, S.~Langerman, and P.~Valtr.
\newblock Reconstructing point set order types from radial orderings.
\newblock In {\em Proc. 25th International Symposium on Algorithms and
  Computation (ISAAC2014)}, volume 8889 of {\em Lecture Notes in Computer
  Science}, pages 15--26. Springer, 2014.
\newblock \href {http://dx.doi.org/10.1007/978-3-319-13075-0_2}
  {\path{doi:10.1007/978-3-319-13075-0_2}}.

\bibitem{angelini2011tree}
P.~Angelini, M.~Geyer, M.~Kaufmann, and D.~Neuwirth.
\newblock On a tree and a path with no geometric simultaneous embedding.
\newblock In {\em Proc. 19th International Symposium on Graph Drawing
  (GD2011)}, volume 6502, pages 38--49. Springer, 2011.
\newblock \href {http://dx.doi.org/10.1007/978-3-642-18469-7_4}
  {\path{doi:10.1007/978-3-642-18469-7_4}}.

\bibitem{basu2007existential}
S.~Basu, R.~Pollack, and M.-F. Roy.
\newblock Existential theory of the reals.
\newblock In {\em Algorithms in Real Algebraic Geometry}, volume~10 of {\em
  Algorithms and Computation in Mathematics}, pages 505--532. Springer, 2006.
\newblock \href {http://dx.doi.org/10.1007/3-540-33099-2_14}
  {\path{doi:10.1007/3-540-33099-2_14}}.

\bibitem{bienstock1991some}
D.~Bienstock.
\newblock Some provably hard crossing number problems.
\newblock {\em Discrete and Computational Geometry}, 6(1):443--459, 1991.
\newblock \href {http://dx.doi.org/10.1007/BF02574701}
  {\path{doi:10.1007/BF02574701}}.

\bibitem{bjorner1999oriented}
A.~Bj{\"o}rner, M.~Las~Vergnas, B.~Sturmfels, N.~White, and G.~M. Ziegler.
\newblock {\em Oriented matroids}, volume~46 of {\em Encyclopedia of
  Mathematics and its Applications}.
\newblock Cambridge University Press, second edition, 1999.

\bibitem{blasius2013handbook}
T.~Bl{\"a}sius, S.~G. Kobourov, and I.~Rutter.
\newblock Simultaneous embedding of planar graphs.
\newblock In {\em Handbook of Graph Drawing and Visualization}. Chapman and
  Hall/CRC, 2013.
\newblock \href {http://arxiv.org/abs/http://arxiv.org/abs/1204.5853}
  {\path{arXiv:http://arxiv.org/abs/1204.5853}}.

\bibitem{brass2007simultaneous}
P.~Brass, E.~Cenek, C.~A. Duncan, A.~Efrat, C.~Erten, D.~P. Ismailescu, S.~G.
  Kobourov, A.~Lubiw, and J.~S. Mitchell.
\newblock On simultaneous planar graph embeddings.
\newblock {\em Computational Geometry}, 36(2):117--130, 2007.
\newblock \href {http://dx.doi.org/10.1016/j.comgeo.2006.05.006}
  {\path{doi:10.1016/j.comgeo.2006.05.006}}.

\bibitem{canny1988algebraic}
J.~Canny.
\newblock Some algebraic and geometric computations in {PSPACE}.
\newblock In {\em Proc. 20th Annual ACM Symposium on Theory of Computing
  (STOC1988)}, pages 460--467. ACM, 1988.
\newblock \href {http://dx.doi.org/10.1145/62212.62257}
  {\path{doi:10.1145/62212.62257}}.

\bibitem{digiacomo2014planar}
E.~Di~Giacomo, W.~Didimo, G.~Liotta, H.~Meijer, and S.~Wismath.
\newblock Planar and quasi planar simultaneous geometric embedding.
\newblock In {\em Proc. 22nd International Symposium on Graph Drawing
  (GD2014)}, volume 8871 of {\em Lecture Notes in Computer Science}, pages
  52--63. Springer, 2014.
\newblock \href {http://dx.doi.org/10.1007/978-3-662-45803-7_5}
  {\path{doi:10.1007/978-3-662-45803-7_5}}.

\bibitem{estrella2008simultaneous}
A.~Estrella-Balderrama, E.~Gassner, M.~J{\"u}nger, M.~Percan, M.~Schaefer, and
  M.~Schulz.
\newblock Simultaneous geometric graph embeddings.
\newblock In {\em Proc. 16th International Symposium on Graph Drawing
  (GD2008)}, pages 280--290. Springer, 2008.
\newblock \href {http://dx.doi.org/10.1007/978-3-540-77537-9_28}
  {\path{doi:10.1007/978-3-540-77537-9_28}}.

\bibitem{ColumnPlanarity_GD2014}
W.~Evans, V.~Kusters, M.~Saumell, and B.~Speckmann.
\newblock Column planarity and partial simultaneous geometric embedding.
\newblock In {\em Proc. 22nd International Symposium on Graph Drawing
  (GD2014)}, volume 8871 of {\em Lecture Notes in Computer Science}, pages
  259--271. Springer, 2014.
\newblock \href {http://dx.doi.org/10.1007/978-3-662-45803-7_22}
  {\path{doi:10.1007/978-3-662-45803-7_22}}.

\bibitem{felsnerbook}
S.~Felsner.
\newblock {\em Geometric Graphs and Arrangements -- Some Chapters from
  Combinatorial Geometry}.
\newblock Advanced Lectures in Mathematics. Vieweg, 2004.

\bibitem{folkman1978oriented}
J.~Folkman and J.~Lawrence.
\newblock Oriented matroids.
\newblock {\em Journal of Combinatorial Theory, Series B}, 25(2):199--236,
  1978.
\newblock \href {http://dx.doi.org/10.1016/0095-8956(78)90039-4}
  {\path{doi:10.1016/0095-8956(78)90039-4}}.

\bibitem{gassner2006simultaneous}
E.~Gassner, M.~J{\"u}nger, M.~Percan, M.~Schaefer, and M.~Schulz.
\newblock Simultaneous graph embeddings with fixed edges.
\newblock In {\em Proc. 32nd International Workshop on Graph-Theoretic Concepts
  in Computer Science (WG2006)}, pages 325--335. Springer, Springer, 2006.
\newblock \href {http://dx.doi.org/10.1007/978-3-540-77537-9_28}
  {\path{doi:10.1007/978-3-540-77537-9_28}}.

\bibitem{geyer2009two}
M.~Geyer, M.~Kaufmann, and I.~Vrt'o.
\newblock Two trees which are self-intersecting when drawn simultaneously.
\newblock {\em Discrete Mathematics}, 309(7):1909--1916, 2009.
\newblock \href {http://dx.doi.org/10.1016/j.disc.2008.01.033}
  {\path{doi:10.1016/j.disc.2008.01.033}}.

\bibitem{goodman1984semispaces}
J.~E. Goodman and R.~Pollack.
\newblock Semispaces of configurations, cell complexes of arrangements.
\newblock {\em Journal of Combinatorial Theory. Series A}, 37(3):257--293,
  1984.
\newblock \href {http://dx.doi.org/10.1016/0097-3165(84)90050-5}
  {\path{doi:10.1016/0097-3165(84)90050-5}}.

\bibitem{goodman1989coordinate}
J.~E. Goodman, R.~Pollack, and B.~Sturmfels.
\newblock Coordinate representation of order types requires exponential
  storage.
\newblock In {\em Proc. 21st Annual ACM Symposium on Theory of Computing
  (STOC1989)}, pages 405--410. ACM, 1989.
\newblock \href {http://dx.doi.org/10.1145/73007.73046}
  {\path{doi:10.1145/73007.73046}}.

\bibitem{grunbaum1972arrangements}
B.~Gr{\"u}nbaum.
\newblock {\em Arrangements and spreads}.
\newblock Number~10 in Regional Conference Series in Mathematics. AMS, 1972.

\bibitem{kapovich2002universality}
M.~Kapovich and J.~J. Millson.
\newblock Universality theorems for configuration spaces of planar linkages.
\newblock {\em Topology}, 41(6):1051--1107, 2002.
\newblock \href {http://dx.doi.org/10.1016/S0040-9383(01)00034-9}
  {\path{doi:10.1016/S0040-9383(01)00034-9}}.

\bibitem{knuth1992axioms}
D.~E. Knuth.
\newblock {\em Axioms and Hulls}, volume 606 of {\em Lecture Notes in Computer
  Science}.
\newblock Springer, 1992.

\bibitem{KM94}
J.~Kratochv{\'\i}l and J.~Matousek.
\newblock Intersection graphs of segments.
\newblock {\em Journal of Combinatorial Theory, Series B}, 62(2):289--315,
  1994.
\newblock \href {http://dx.doi.org/10.1006/jctb.1994.1071}
  {\path{doi:10.1006/jctb.1994.1071}}.

\bibitem{kyncl2011simple}
J.~Kyn{\v{c}}l.
\newblock Simple realizability of complete abstract topological graphs in p.
\newblock {\em Discrete and Computational Geometry}, 45(3):383--399, 2011.
\newblock \href {http://dx.doi.org/10.1007/s00454-010-9320-x}
  {\path{doi:10.1007/s00454-010-9320-x}}.

\bibitem{levi1926teilung}
F.~Levi.
\newblock Die teilung der projektiven ebene durch gerade oder pseudogerade.
\newblock {\em Ber. Math.-Phys. Kl. S{\"a}chs. Akad. Wiss}, 78:256--267, 1926.

\bibitem{matousek2014intersection}
J.~Matou\v{s}ek.
\newblock Intersection graphs of segments and {$\ER$}.
\newblock {\em CoRR}, abs/1406.2636, 2014.
\newblock URL: \url{http://arxiv.org/abs/1406.2636}.

\bibitem{mcdiarmid2013integer}
C.~McDiarmid and T.~M{\"u}ller.
\newblock Integer realizations of disk and segment graphs.
\newblock {\em Journal of Combinatorial Theory, Series B}, 103(1):114--143,
  2013.
\newblock \href {http://dx.doi.org/10.1016/j.jctb.2012.09.004}
  {\path{doi:10.1016/j.jctb.2012.09.004}}.

\bibitem{mnev1988universality}
N.~Mn{\"e}v.
\newblock The universality theorems on the classification problem of
  configuration varieties and convex polytopes varieties.
\newblock In {\em Topology and geometry -- Rohlin seminar}, pages 527--543.
  Springer, 1988.
\newblock \href {http://dx.doi.org/10.1007/BFb0082792}
  {\path{doi:10.1007/BFb0082792}}.

\bibitem{RG95}
J.~Richter-Gebert.
\newblock Mnev's universality theorem revisited.
\newblock In {\em S\'eminaire Lotharingien de Combinatoire}, volume B34h, pages
  1--15. 1995.

\bibitem{richter2011perspectives}
J.~Richter-Gebert.
\newblock {\em Perspectives on projective geometry: a guided tour through real
  and complex geometry}.
\newblock Springer, 2011.

\bibitem{schaefer2010complexity}
M.~Schaefer.
\newblock Complexity of some geometric and topological problems.
\newblock In {\em Proc. 18th International Symposium on Graph Drawing
  (GD2010)}, pages 334--344. Springer, 2010.
\newblock \href {http://dx.doi.org/10.1007/978-3-642-11805-0_32}
  {\path{doi:10.1007/978-3-642-11805-0_32}}.

\bibitem{schaefer2013realizability}
M.~Schaefer.
\newblock Realizability of graphs and linkages.
\newblock In {\em Thirty Essays on Geometric Graph Theory}, pages 461--482.
  Springer, 2013.
\newblock \href {http://dx.doi.org/10.1007/978-1-4614-0110-0_24}
  {\path{doi:10.1007/978-1-4614-0110-0_24}}.

\bibitem{S91}
P.~W. Shor.
\newblock Stretchability of pseudoline arrangements is {NP}-hard.
\newblock In {\em Applied Geometry and Discrete Mathematics: The Victor Klee
  Festschrift}, volume~4 of {\em DIMACS Series in Discrete Mathematics and
  Theoretical Computer Science}, pages 531--554. AMS, 1991.

\end{thebibliography}
\end{document}